\documentclass[conference]{IEEEtran}
\usepackage{amsmath,amssymb,amsfonts,amsthm}
\usepackage{mathtools}
\usepackage{booktabs}
\usepackage{bm}
\usepackage{algorithm}
\usepackage{algorithmic}
\usepackage{graphicx}
\usepackage{cite}
\usepackage{url}
\usepackage{tikz}
\usepackage{subfigure}
\usetikzlibrary{shapes, arrows.meta, positioning, calc, backgrounds, fit, decorations.pathmorphing}
\usepackage[left=0.64in, right=0.625in, top=0.76in, bottom=1.08in]{geometry}

\newcommand{\argmin}{\operatornamewithlimits{arg\,min}}
\newcommand{\argmax}{\operatornamewithlimits{arg\,max}}

\newcommand{\wrap}{\mathrm{wrap}}

\newcommand{\Ree}{\operatorname{Re}}

\newcommand{\jj}{\mathrm{j}} 

\theoremstyle{plain}
\newtheorem{theorem}{Theorem}

\theoremstyle{definition}

\theoremstyle{remark}

\title{LO-Free Joint Communication and Sensing via Inter-Antenna Cross-Correlation and Graph-Based Spatial Phase Inference}

\author{
     Hasan Atalay Günel\IEEEauthorrefmark{1}\IEEEauthorrefmark{3}, Mohaned Chraiti\IEEEauthorrefmark{2}, and Ali Görçin\IEEEauthorrefmark{1}\IEEEauthorrefmark{3} \\
    \IEEEauthorrefmark{1}Communications and Signal Processing Research (HİSAR) Lab., T{\"{U}}B{\.{I}}TAK B{\.{I}}LGEM, Kocaeli, Turkey \\
 \IEEEauthorrefmark{2} Department of Electronics Engineering, Sabanci University, Istanbul, Turkey\\ 
\IEEEauthorrefmark{3}
Department of Electrical and Electronic Engineering, Istanbul Technical University, Istanbul, Turkey.\\
    Emails:
    hasan.gunel@tubitak.gov.tr,  
    mohaned.chraiti@sabanciuniv.edu,
    and 
    ali.gorcin@tubitak.gov.tr.
}

\begin{document}
\maketitle

\begin{abstract}


Joint communication and sensing (JCAS) typically rely on coherent downconversion to recover the phase relationships required for array processing. Meanwhile, Local Oscillators (LOs) are a major source of cost, power consumption, and implementation complexity in millimeter-wave (mmWave) and sub-THz receivers. Existing LO-free receiver designs are typically based on envelope detection or related non-coherent operations that do not preserve inter-branch phase information, which limits their applicability to JCAS. This work proposes an LO-free JCAS receiver architecture that leverages pairwise inter-branch correlation processing to suppress the common carrier component and to synthesize relative-phase observables across the antenna array, enabling both data communication and Direction-of-Arrival (DoA) estimation. The transmitted symbols are designed to induce distinct phase-difference patterns, such that the resulting correlation phases contain both a data-dependent component and a DoA-dependent component. We formulate recovery as inference over a correlation graph, where branches are nodes and pairwise correlations are edges, and show that the resulting cycle-consistent redundancy enables robust relative-phase recovery under noise and perturbations. We further derive a topology-aware Cramér-Rao lower bound for DoA estimation under a locally unwrapped approximation. Numerical results confirm that increasing graph connectivity improves both bit-error rate and DoA accuracy, with sensing performance approaching the derived bound. 

\end{abstract}

\begin{IEEEkeywords}
Joint communication and sensing, antenna-domain correlation,
LO-free receivers, graph signal processing.
\end{IEEEkeywords}

\section{Introduction}
Joint communication and sensing (JCAS) is increasingly adopted, where the same radio platform supports both data transmission and spatial inference~\cite{he2024isac}. Most existing designs rely on coherent receiver architectures equipped with local oscillators (LOs), which enable frequency translation and recovery of amplitude, carrier phase, and inter-antenna phase relationships. These observables are central to coherent demodulation, beamforming, and sensing tasks such as direction-of-arrival (DoA) estimation.

At millimeter-wave (mmWave) and sub-terahertz (sub-THz) frequencies, LO-empowered coherent receivers are demanding in terms of power consumption, hardware complexity, implementation cost, and sensitivity to oscillator impairments such as phase noise~\cite{yang2019hard}. This has motivated extensive work on LO-relaxed and LO-free front-ends, particularly for low-power and Internet-of-Things receivers. Representative examples include envelope-detection, square-law, and self-mixing architectures, which avoid continuous coherent LO operation and instead rely on noncoherent observables such as signal magnitude, energy, or beat-frequency components~\cite{wentzloff2020review,wurx2022,jlpea2025wvrx,selfdemod2010}. While such architectures are attractive from a power-efficiency perspective, they do not naturally preserve the branch-to-branch relative phase structure required for coherent spatial processing. As a result, sensing functions such as DoA estimation and beamspace inference cannot be directly supported in their standard form~\cite{godara97,rajabi2018irr}.


The feasibility of JCAS under LO-free reception remains largely open. In particular, it remains unclear how to construct a measurement domain that simultaneously supports data detection and DoA inference when conventional coherent phase recovery is unavailable. The key observation is that pairwise inter-branch correlations can suppress the common carrier-dependent oscillatory term while retaining relative phase information across the array. Building on this principle, we develop an LO-free JCAS receiver in which transmitted signaling induces distinguishable correlation-phase patterns, while the received pairwise phase differences also retain the DoA-dependent spatial signature. To improve robustness, we represent the resulting relative-phase measurements on a graph, where branches are nodes and pairwise correlations are edges. This graph-based formulation enables topology-aware processing for joint data detection and DoA estimation directly from noisy correlation-domain observables. We characterize the resulting communication and sensing performance, including a topology-aware Cram\'er-Rao lower bound (CRLB) for DoA estimation and a topology-dependent noise-reduction result. The main contributions of this paper are summarized as follows:

\begin{itemize}
\item We propose an LO-free multi-branch correlation JCAS technique in which both communication and DoA estimation functionalities are simultaneously supported.
\item We develop a graph-structured framework, termed Graph-based Spatial Manifold Communications (G-SPMC), for joint data detection and DoA estimation from noisy pairwise phase-difference observations.
\item We characterize the sensing limits through a topology-aware CRLB for DoA estimation, and establish uniqueness of the reconstructed phase-difference observable, which guarantees unambiguous recovery of both the transmitted data and the DoA.
\end{itemize}


\section{Spatial Phase Manifold Communications}

\subsection{Transmission and received signal model}

We consider a narrowband link with an $N_t$-antenna transmitter and an LO-free receiver with $N_r$ receive antennas (also called ports). The receiver aims to detect the transmitted data while simultaneously estimating the transmitter's DoA. In the $k$th symbol interval, the input bits are mapped to one codeword from a finite spatial constellation
\begin{equation}
\mathcal S \triangleq \{\bm s^{(1)},\dots,\bm s^{(M)}\},
\qquad \bm s^{(m)}\in\mathbb C^{N_t},
\end{equation}
where each codeword represents $\log_2 M$ information bits. The transmitted spatial codeword at interval $k$ is denoted by
\begin{equation}
\bm s_k \triangleq \big[e^{\jj\varphi_1(k)},\dots,e^{\jj\varphi_{N_t}(k)}\big]^{\mathsf T}\in\mathcal S,
\end{equation}
yielding the transmitted signal $\bm x_{\mathrm{tx}}(t)=x(t)\bm s_k,$ where $x(t)$ denotes the common transmit pulse. Under a dominant plane wave with direction $\theta$, the narrowband channel is modeled as
\begin{equation}
\bm H(\theta)=\bm D_r(\theta)\,\widetilde{\bm H},
\label{eq:H_factorized}
\end{equation}
where $\widetilde{\bm H}\in\mathbb C^{N_r\times N_t}$ denotes the direction-independent channel factor, and
\begin{equation}
\bm D_r(\theta)\triangleq \mathrm{diag}\!\left(e^{\jj\gamma_1(\theta)},\dots,e^{\jj\gamma_{N_r}(\theta)}\right)
\end{equation}
collects the receive-side geometric phases induced by the arrival direction $\theta$. Specifically,
\begin{equation}
\gamma_i(\theta)=-\frac{2\pi}{\lambda}\bm p_i^{\mathsf T}\bm u(\theta),
\end{equation}
where $\bm p_i$ denotes the position of the $i$th receive port, $\lambda$ is the carrier wavelength, and $\bm u(\theta)$ is the receiver's array response vector. Define
\begin{equation}
\widetilde{\bm g}_k \triangleq \widetilde{\bm H}\bm s_k,
\qquad
\bm g_k(\theta)\triangleq \bm H(\theta)\bm s_k=\bm D_r(\theta)\widetilde{\bm g}_k.
\end{equation}
Writing the $i$th entry of $\widetilde{\bm g}_k$ as
\begin{equation}
[\widetilde{\bm g}_k]_i=\tilde\alpha_i(k)e^{\jj\tilde\beta_i(k)},
\qquad \tilde\alpha_i(k)\in\mathbb R_+,
\end{equation}
we obtain
\begin{equation}
[\bm g_k(\theta)]_i
=
\tilde\alpha_i(k)e^{\jj(\gamma_i(\theta)+\tilde\beta_i(k))}.
\end{equation}
Accordingly, define the effective phase at receive port $i$ as
\begin{equation}
\beta_i(\theta,k)\triangleq \gamma_i(\theta)+\tilde\beta_i(k).
\label{eq:effective_phase_def}
\end{equation}
The analytic received RF signal at port $i$ is then modeled as
\begin{equation}
y_i(t)=[\bm g_k(\theta)]_i\,x(t)\,e^{\jj\psi_c(t)}+n_i(t),
\label{eq:yi_corrected}
\end{equation}
where $\psi_c(t)\triangleq 2\pi f_c t+\phi_c(t)$ is the common carrier phase process observed across receive ports, and $n_i(t)$ denotes additive receiver noise. The term $\phi_c(t)$ models the common phase fluctuation of the incident carrier or any residual phase noise. The inter-port phase difference is therefore
\begin{equation}
\begin{aligned}
\Delta\beta_{ij}(\theta,k)
&\triangleq
\beta_i(\theta,k)-\beta_j(\theta,k)
\\
&=
\Delta\beta_{ij}^{\mathrm{geo}}(\theta)+\Delta\beta_{ij}^{\mathrm{sym}}(k),
\label{eq:phase_diff_decomp}
\end{aligned}
\end{equation}
where
\begin{equation}
\Delta\beta_{ij}^{\mathrm{geo}}(\theta)
=
\gamma_i(\theta)-\gamma_j(\theta)
=
-\frac{2\pi}{\lambda}\bm b_{ij}^{\mathsf T}\bm u(\theta),
\qquad
\bm b_{ij}\triangleq \bm p_i-\bm p_j,
\label{eq:geo_phase_diff_corrected}
\end{equation}
and
\begin{equation}
\Delta\beta_{ij}^{\mathrm{sym}}(k)
\triangleq
\tilde\beta_i(k)-\tilde\beta_j(k).
\end{equation}

\noindent Thus, the received pairwise phase pattern is jointly determined by the receive-array geometry and the transmit-induced effective phase profile. The signaling codebook $\{\bm s_k\}$ is designed such that the resulting symbol-dependent phase-difference set $\{\Delta\beta_{ij}^{\mathrm{sym}}(k)\}_{i,j}$ is distinguishable across symbols, while the geometric term $\Delta\beta_{ij}^{\mathrm{geo}}(\theta)$ remains available for DoA inference.

\begin{figure}[t]
    \centering
    \includegraphics[width=1\linewidth]{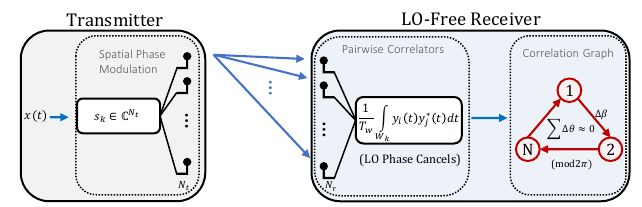}
    \caption{System model of the proposed LO-free G-SPMC receiver} 
    \label{fig:system}
    \vspace{-0.5cm}
\end{figure}


\subsection{SPMC receiver}



The receiver is assumed to be LO-free. It extracts both the transmitted information and the DoA from pairwise inter-port correlation observables. In practice, such observables can be formed by pairwise mixing of the received branch signals, followed by low-pass filtering and windowed accumulation, after which digital processing is applied.

Consider the pairwise correlation matrix over the observation window $\mathcal W_k$:
\begin{equation}
\widehat{\bm C}_k \triangleq \frac{1}{T_w}\int_{\mathcal W_k}\bm y(t)\bm y(t)^{\mathsf H}\,dt,
\end{equation}
where $\bm y(t)\triangleq [y_1(t),\dots,y_{N_r}(t)]^{\mathsf T}$ and $T_w$ is the window length. Its $(i,j)$th entry is
\begin{equation}
\widehat c_{ij}(k)\triangleq \frac{1}{T_w}\int_{\mathcal W_k} y_i(t)y_j^*(t)\,dt.
\end{equation}
Substituting \eqref{eq:yi_corrected} gives
\begin{equation}
y_i(t)y_j^*(t)
=
[\bm g_k(\theta)]_i[\bm g_k(\theta)]_j^*|x(t)|^2
+
\text{cross terms noise},
\end{equation}
where the common carrier factor $e^{\jj\psi_c(t)}$ cancels exactly in the product $y_i(t)y_j^*(t)$. Hence, for $i\neq j$,
\begin{equation}
\widehat c_{ij}(k)
=
\kappa_{ij}(k)e^{\jj\Delta\beta_{ij}(\theta,k)}+v_{ij}(k),
\label{eq:cij_phasor_corrected}
\end{equation}
where
\begin{equation}
\kappa_{ij}(k)\triangleq
\tilde\alpha_i(k)\tilde\alpha_j(k)\,
\frac{1}{T_w}\int_{\mathcal W_k}|x(t)|^2dt
\in\mathbb R_+,
\end{equation}
and $v_{ij}(k)$ is the effective complex disturbance induced by receiver noise and finite-window averaging. 

Let $G=(V,E)$ be the correlation graph, where $V=\{1,\dots,N_r\}$ indexes the receive ports and each edge $(i,j)\in E$ corresponds to one retained pairwise correlation measurement. For each edge, we form the normalized correlation
\begin{equation}
\widehat\rho_{ij}(k)\triangleq
\frac{\widehat c_{ij}(k)}{\sqrt{\widehat c_{ii}(k)\widehat c_{jj}(k)}}.
\label{eq:rhoij_def}
\end{equation}
Since inference depends only on edge phase, we define
\begin{equation}
\widehat\psi_{ij}(k)\triangleq \angle \widehat\rho_{ij}(k)\in(-\pi,\pi],
\label{eq:phase_only_psi_local}
\end{equation}
and the associated phase-only phasor
\begin{equation}
\bar\rho_{ij}(k)\triangleq e^{\jj\widehat\psi_{ij}(k)}
=\frac{\widehat\rho_{ij}(k)}{|\widehat\rho_{ij}(k)|},
\qquad |\widehat\rho_{ij}(k)|>0.
\label{eq:phase_only_rho_local}
\end{equation}
In the noiseless case, $\widehat\psi_{ij}(k)$ reduces to the wrapped inter-port phase difference in $(-\pi,\pi]$
\begin{equation}
\bar\rho_{ij}(k)=e^{\jj\Delta\bar{\beta}_{ij}(\theta,k)},
\,\,
\Delta\bar{\beta}_{ij}(\theta,k)\triangleq \wrap\!\big(\Delta\beta_{ij}(\theta,k)\big).
\label{eq:rhoij}
\end{equation}
Using \eqref{eq:phase_diff_decomp}, the edge phase is therefore modeled as
\begin{equation}
\Delta\bar{\beta}_{ij}(\theta,k)
=
\wrap\!\Big(
\Delta\beta^{\mathrm{geo}}_{ij}(\theta)
+
\Delta\beta^{\mathrm{sym}}_{ij}(k)
+
\nu_{ij}(k)
\Big),
\label{eq:edge_superposition}
\end{equation}
where $\nu_{ij}(k)$ denotes the induced wrapped phase perturbation. In the noiseless single-source setting, the edge phases are induced by pairwise differences of the underlying port phases, namely $
\Delta\beta_{ij}(\theta,k)$ defined in \eqref{eq:phase_diff_decomp}. Therefore, for any directed cycle $c$ in $G$, the edge-phase differences telescope:
\begin{equation}
\sum_{(i,j)\in c}\Delta\beta_{ij}(\theta,k)=0
\quad \mathrm{mod}\; 2\pi.
\label{eq:meas_cycle_consistency}
\end{equation}


\section{Graph-Structured SPMC}

Information is embedded in the phase pattern across the receive ports. The correlation operation yields a phase-difference measurement for each port pair, which can be interpreted as an edge phase on the correlation graph.

\subsection{Spatial Phase Graph Representation and Operators}

The receiver operates on pairwise phase differences. Let
\begin{equation}
\bm z_k \triangleq \big[e^{\jj\beta_1(\theta,k)},\dots,e^{\jj\beta_{N_r}(\theta,k)}\big]^{\mathsf T}\in (\mathbb S^1)^{N_r}
\label{eq:zk_def}
\end{equation}
denote the latent node-phase vector at the $k$th observation window, where $\beta_i(\theta,k)$ is the effective phase at receive port $i$ defined in \eqref{eq:effective_phase_def}. For any edge $(i,j)\in E$, we have
\begin{equation}
e^{\jj \Delta\beta_{ij}(\theta,k)} = z_{k,i} z_{k,j}^*.
\label{eq:edge_from_node_phases}
\end{equation}
This representation implies cycle consistency in the noiseless single-source setting. Specifically, for any directed cycle
$c=(i_0,i_1,\dots,i_L=i_0)$ in $G$,
\begin{equation}
\prod_{\ell=0}^{L-1} e^{\jj \Delta\beta_{i_\ell i_{\ell+1}}(\theta,k)} = 1
\,\Longleftrightarrow\,
\sum_{\ell=0}^{L-1} \Delta\beta_{i_\ell i_{\ell+1}}(\theta,k)=0
\ \mathrm{mod}\; 2\pi,
\label{eq:cycle_consistency_prod}
\end{equation}
as already indicated by \eqref{eq:meas_cycle_consistency}.

To attenuate quasi-static geometric contributions, we define the edge-phase increment operator
\begin{equation}
\delta_{ij}(k)\triangleq
\angle\!\Big(\bar{\rho}_{ij}(k)\bar{\rho}_{ij}^*(k-1)\Big),
\label{eq:edge_increment}
\end{equation}
where $\bar{\rho}_{ij}(k)$ is  defined in \eqref{eq:phase_only_rho_local}. Considering \eqref{eq:phase_diff_decomp}, we obtain
\begin{equation}
\delta_{ij}(k)
=
\wrap\!\big(\Delta\beta_{ij}(\theta,k)-\Delta\beta_{ij}(\theta,k-1)\big).
\label{eq:edge_increment_diff}
\end{equation}
If the geometric term $\Delta\beta_{ij}^{\mathrm{geo}}(\theta)$ varies negligibly over two consecutive windows, then
\begin{equation}
\delta_{ij}(k)
\approx
\wrap\!\Big(
\Delta\beta^{\mathrm{sym}}_{ij}(k)-\Delta\beta^{\mathrm{sym}}_{ij}(k-1)
+\Delta\nu_{ij}(k)
\Big),
\label{eq:edge_increment_geo_free}
\end{equation}
where $\Delta\nu_{ij}(k)$ denotes the corresponding inter-window noise. Thus, $\delta_{ij}(k)$ primarily captures symbol-to-symbol evolution.

\subsection{Manifold Inference via Angular Synchronization}

Given the phase-only edge measurements $\{\bar{\rho}_{ij}(k)\}_{(i,j)\in E}$, the objective is to estimate the latent node-phase vector
\(
\bm z_k\in(\mathbb S^1)^{N_r}.
\)
In the noiseless case, \eqref{eq:rhoij} and \eqref{eq:edge_from_node_phases} give
$\bar{\rho}_{ij}(k)=\bar{\rho}_{ij}(k)\approx z_{k,i}z_{k,j}^*=e^{\jj\Delta\bar{\beta}_{ij}(\theta,k)}.$
Considering the phase perturbations (noise), we adopt the unit-modulus multiplicative model
\begin{equation}
\bar{\rho}_{ij}(k)=z_{k,i}z_{k,j}^*\,\eta_{ij}(k),
\label{eq:rho_noise_model}
\end{equation}
where $\eta_{ij}(k)\in\mathbb S^1$ captures the wrapped edge-phase noise. Under a maximum-likelihood formulation with von Mises edge noise, estimation reduces to angular synchronization~\cite{fisher1993circular}:
\begin{equation}
\widehat{\bm z}_k
\in
\argmax_{\bm z\in(\mathbb S^1)^{N_r}}
\sum_{(i,j)\in E}
w_{ij}(k)\,
\Ree\!\big(\bar{\rho}_{ij}(k) z_i z_j^*\big),
\label{eq:synchronization}
\end{equation}
where $w_{ij}(k)\ge 0$ weights the reliability of edge $(i,j)$, for example according to the magnitude of the corresponding sample correlation.


To obtain an efficient initializer, we construct the Hermitian measurement matrix $\bm Y_k\in\mathbb C^{N_r\times N_r}$ as
\begin{equation}
[\bm Y_k]_{ij}\triangleq
\begin{cases}
w_{ij}(k)\,\bar{\rho}_{ij}(k), & (i,j)\in E,\\[2pt]
\overline{w_{ji}(k)\,\bar{\rho}_{ji}(k)}, & (j,i)\in E,\\[2pt]
0, & \text{otherwise},
\end{cases}
\qquad [\bm Y_k]_{ii}=0,
\label{eq:Yk_def}
\end{equation}
which ensures $\bm Y_k=\bm Y_k^{\mathsf H}$. Let $\bm v_k$ denote the leading eigenvector of $\bm Y_k$. The spectral initializer is then obtained by entrywise projection onto the unit circle:
\begin{equation}
\widehat z_{k,i}^{(0)}\triangleq \exp\!\big(\jj\,\angle(v_{k,i})\big).
\label{eq:spectral}
\end{equation}

\section{Theoretical Analysis}
\label{sec:theory_alg}
This section establishes the fundamental limits of G-SPMC. We study identifiability of graph-consistent phase states, derive a topology-aware CRLB for DoA estimation, and prove the denoising gain of projection onto the gradient subspace in the locally unwrapped regime.

\subsection{Identifiability Analysis}

The graph-based receiver relies only on relative phase information and does not directly observe absolute node phases. We therefore first establish identifiability of the latent node-phase vector from the pairwise edge-phase observations. In this setting, identifiability asks whether two different node-phase vectors can produce the same edge-phase measurements. If so, the phase state cannot be uniquely recovered.

The following theorem shows that, under the locally unwrapped model, the only inherent ambiguity is an additive constant on each connected component. In particular, for a connected receiver correlation graph, this reduces to the usual global phase-offset ambiguity. Hence, graph connectivity is the key structural condition for meaningful phase recovery.

\begin{theorem}
\label{thm:identifiability}
Let $G=(V,E)$ be a graph with $N_r$ nodes and oriented incidence matrix
$\mathbf B\in\{-1,0,1\}^{N_r\times |E|}$. Assume a locally unwrapped edge-phase representation
\begin{equation}
\bm\psi = \mathbf B^{\mathsf T}\bm\beta,
\label{eq:local_unwrapped_model}
\end{equation}
where $\bm\beta\in\mathbb R^{N_r}$ denotes the latent node-phase vector and $\bm\psi\in\mathbb R^{|E|}$ collects the corresponding unwrapped edge-phase differences. If $G$ has $C$ connected components with node sets $\{V_c\}_{c=1}^C$, then $\bm\beta$ is identifiable from $\bm\psi$ only up to additive constants on each connected component, i.e.,
\begin{equation}
\bm\beta' \text{ yields the same } \bm\psi
\quad\Longleftrightarrow\quad
\bm\beta'=\bm\beta+\sum_{c=1}^C a_c\,\mathbf 1_{V_c},
\end{equation}
for some scalars $\{a_c\}_{c=1}^C$. In particular, if $C=1$, then $\bm\beta$ is identifiable up to a single global constant $c\mathbf 1_{N_r}$.
\end{theorem}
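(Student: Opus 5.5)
By linearity of the model \eqref{eq:local_unwrapped_model}, two node-phase vectors $\bm\beta$ and $\bm\beta'$ yield the same $\bm\psi$ if and only if $\mathbf B^{\mathsf T}(\bm\beta'-\bm\beta)=\bm 0$. The theorem therefore reduces to identifying the null space $\ker(\mathbf B^{\mathsf T})$ and showing that it equals $\mathrm{span}\{\mathbf 1_{V_1},\dots,\mathbf 1_{V_C}\}$. We write $\bm d\triangleq\bm\beta'-\bm\beta$ and prove both inclusions.

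\emph{($\supseteq$)} Each column of $\mathbf B$ corresponds to an edge $e=(i,j)$ and has a $+1$ in row $i$, a $-1$ in row $j$, and zeros elsewhere. Hence $(\mathbf B^{\mathsf T}\bm d)_e=d_i-d_j$, up to a sign fixed by the orientation. Both endpoints of any edge lie in the same component $V_c$. For $\bm d=\sum_c a_c\mathbf 1_{V_c}$ this gives $d_i=d_j=a_c$, so every entry of $\mathbf B^{\mathsf T}\bm d$ vanishes. This proves the direction ``$\Leftarrow$''.

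\emph{($\subseteq$)} Suppose $\mathbf B^{\mathsf T}\bm d=\bm 0$. Then $d_i=d_j$ for every edge $(i,j)\in E$. Take any two nodes $u,v$ in the same component $V_c$. By connectivity of $V_c$ there is a path $u=i_0,i_1,\dots,i_L=v$ with each consecutive pair joined by an edge (in some orientation). Applying the edge equality along the path and chaining gives $d_u=d_v$. Thus $\bm d$ is constant on each $V_c$; call that value $a_c$. Since the components partition $V$, it follows that $\bm d=\sum_c a_c\mathbf 1_{V_c}$, which is the direction ``$\Rightarrow$''.

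As a sanity check, $\mathrm{rank}(\mathbf B)=N_r-C$, which is the standard result that the graph Laplacian $\mathbf L=\mathbf B\mathbf B^{\mathsf T}$ has a zero eigenvalue of multiplicity $C$. The indicators $\mathbf 1_{V_c}$ have disjoint supports, so they are linearly independent and form a basis of the kernel. The case $C=1$ is then immediate: the kernel is $\mathrm{span}\{\mathbf 1_{N_r}\}$, which is the global phase-offset ambiguity.

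No step is genuinely difficult. The one point that needs care is the path-chaining argument: the edge orientation does not matter, because $d_i=d_j$ is symmetric. We should also state explicitly that the result concerns the unwrapped linear model only. The modulo-$2\pi$ integer ambiguities of \eqref{eq:edge_superposition} are excluded by the assumption in \eqref{eq:local_unwrapped_model}.
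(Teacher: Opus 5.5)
Your proof is correct, but it establishes the key inclusion $\ker(\mathbf B^{\mathsf T})\subseteq\mathrm{span}\{\mathbf 1_{V_1},\dots,\mathbf 1_{V_C}\}$ differently from the paper.

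The paper proves only the easy inclusion directly: each incidence column cancels on $\mathbf 1_{V_c}$, exactly as in your ($\supseteq$) step. It then closes the argument by dimension counting. It cites $\mathrm{rank}(\mathbf B)=N_r-C$ as a standard fact of algebraic graph theory and concludes $\dim\mathcal N(\mathbf B^{\mathsf T})=C$ by rank--nullity. Linear independence of the $C$ indicators then gives equality of the two subspaces.

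You instead prove the reverse inclusion directly, by chaining $d_i=d_j$ along paths inside each component. You use the rank formula only as a sanity check.

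Your route is self-contained, and it yields $\mathrm{rank}(\mathbf B)=N_r-C$ as a corollary, by applying rank--nullity to the kernel you computed. The paper's route is shorter, but it outsources the real content to the cited rank result, whose usual proof is precisely your path-chaining argument. The paper's framing does have one practical benefit: it puts $\mathrm{rank}(\mathbf B)=N_r-C$ up front, and Theorem~\ref{thm:cycle_contraction} reuses it for the trace of the projector. Your proof supplies that same fact, so nothing is lost downstream.
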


\begin{proof}
Suppose $\bm\beta$ and $\bm\beta'$ are two node-phase vectors such that
\begin{equation}
\mathbf B^{\mathsf T}\bm\beta=\bm\psi
\qquad\text{and}\qquad
\mathbf B^{\mathsf T}\bm\beta'=\bm\psi.
\end{equation}
Subtracting the two equalities gives $\mathbf B^{\mathsf T}(\bm\beta-\bm\beta')=\mathbf 0$. Hence, two node-phase vectors are observationally indistinguishable if and only if their difference lies in the nullspace of $\mathbf B^{\mathsf T}$:
\begin{equation}
\bm\beta-\bm\beta' \in \mathcal N(\mathbf B^{\mathsf T}).
\end{equation}
Therefore, the identifiability question reduces to determining the structure of $\mathcal N(\mathbf B^{\mathsf T})$.

Let $V=\bigcup_{c=1}^C V_c$ denote the partition of the graph into its $C$ connected components, and let $\mathbf 1_{V_c}\in\mathbb R^{N_r}$ be the indicator vector of component $V_c$. For each $c$, every edge either lies entirely within $V_c$ or lies entirely outside it. Since each column of $\mathbf B$ contains one $+1$ entry and one $-1$ entry at the incident nodes of the corresponding edge, the contributions cancel for edges inside $V_c$ and vanish for edges outside $V_c$. Therefore,
\begin{equation}
\mathbf B^{\mathsf T}\mathbf 1_{V_c}=\mathbf 0,\qquad c=1,\dots,C,
\end{equation}
which implies $
\mathrm{span}\{\mathbf 1_{V_1},\dots,\mathbf 1_{V_C}\}\subseteq \mathcal N(\mathbf B^{\mathsf T}).$ Moreover, a standard result from algebraic graph theory gives $
\mathrm{rank}(\mathbf B)=N_r-C.$ Hence, by rank--nullity,
\begin{equation}
\dim \mathcal N(\mathbf B^{\mathsf T})
= N_r-\mathrm{rank}(\mathbf B^{\mathsf T})
= C.
\end{equation}
Since the vectors $\mathbf 1_{V_1},\dots,\mathbf 1_{V_C}$ are linearly independent, their span also has dimension $C$. Therefore,
\begin{equation}
\mathcal N(\mathbf B^{\mathsf T})
=
\mathrm{span}\{\mathbf 1_{V_1},\dots,\mathbf 1_{V_C}\}.
\end{equation}
It follows that if $\bm\beta'$ yields the same $\bm\psi$ as $\bm\beta$, then $
\bm\beta-\bm\beta' \in \mathcal N(\mathbf B^{\mathsf T}),$ and hence there exist scalars $a_1,\dots,a_C$ such that
\begin{equation}
\bm\beta-\bm\beta' = \sum_{c=1}^C a_c\,\mathbf 1_{V_c}.
\end{equation}
Equivalently, $
\bm\beta'=\bm\beta+\sum_{c=1}^C \tilde a_c\,\mathbf 1_{V_c},$ 
where $\tilde a_c=-a_c$. Thus, the latent node-phase vector is uniquely determined by $\bm\psi$ up to one additive constant on each connected component. If $C=1$, this reduces to the usual global phase-offset ambiguity
\begin{equation}
\bm\beta'=\bm\beta+c\mathbf 1_{N_r}.
\end{equation}
This proves the claim.
\end{proof}

Theorem~\ref{thm:identifiability} implies that the measured edge-phase differences can recover the phase pattern across the graph, but not the absolute phase at each port. In $
\bm\beta'=\bm\beta+\sum_{c=1}^C a_c\,\mathbf 1_{V_c},$ 
each $a_c$ is just a scalar constant added to all nodes in the $c$th connected component. Such a uniform shift does not change any phase difference within that component. Therefore, for a connected graph, all node phases are determined up to one common offset; for a disconnected graph, each disconnected part has its own separate offset.

\subsection{Estimation Bounds: Topology-Aware CRLB}
\label{sec:top-CRLB}

We quantify a precision limit for estimating the direction parameter $\theta$ from edge phases after conditioning the observations to depend on geometry only. Specifically, we assume either (i) sensing-only pilot windows are used, or (ii) the symbol-dependent phase term $\Delta\beta^{\mathrm{sym}}_{ij}(k)$ is removed using the detected codeword. Under a standard error (locally unwrapped) approximation, the resulting edge-phase measurements satisfy
\begin{equation}
\hat{\psi}_{ij}\sim \mathcal{N}\big(\psi_{ij}(\theta),\sigma_{ij}^2\big),
\qquad
\psi_{ij}(\theta)= -\frac{2\pi}{\lambda}\bm b_{ij}^{\mathsf T}\bm u(\theta),
\label{eq:geom_only_phase_model}
\end{equation}
where $\psi_{ij}(\theta)$ coincides with the geometric edge-phase difference $\Delta\beta^{\mathrm{geo}}_{ij}(\theta)$ defined in \eqref{eq:geo_phase_diff_corrected}. In practice, this geometry-only conditioning is implemented either by allocating sensing-only pilot windows or by subtracting the detected symbol-dependent phase contribution. Likewise, for communication, the quasi-static geometric term can be removed via pilot-based calibration or attenuated by the edge-increment operator in~\eqref{eq:edge_increment}.

\begin{theorem}[Topology-aware Fisher Information and CRLB]
\label{thm:crlb}
Under the conditioned, locally unwrapped edge-phase model in \eqref{eq:geom_only_phase_model}, the Fisher information for estimating $\theta$ from one snapshot on $G$ is
\begin{equation}
\mathcal{I}(\theta)
=\left(\frac{2\pi}{\lambda}\right)^2
\sum_{(i,j)\in E}\frac{1}{\sigma_{ij}^2}\,
\left(\bm b_{ij}^{\mathsf T}\dot{\bm u}(\theta)\right)^2,
\label{eq:fisher_info}
\end{equation}
where $\dot{\bm u}(\theta)\triangleq \frac{\partial \bm u(\theta)}{\partial \theta}$.
Consequently, any unbiased estimator satisfies
\begin{equation}
\mathrm{var}(\hat{\theta})\ge \mathcal{I}(\theta)^{-1}.
\end{equation}
\end{theorem}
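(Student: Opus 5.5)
The argument is the standard Fisher-information computation for a Gaussian model whose mean depends on a scalar parameter and whose variances do not. I would first make explicit the implicit assumption in \eqref{eq:geom_only_phase_model}: the edge-phase errors $\hat\psi_{ij}-\psi_{ij}(\theta)$, $(i,j)\in E$, are mutually independent and their variances $\sigma_{ij}^2$ do not depend on $\theta$. Under this assumption the joint log-likelihood of one snapshot is
\begin{equation*}
\ell(\theta)=-\sum_{(i,j)\in E}\frac{\big(\hat\psi_{ij}-\psi_{ij}(\theta)\big)^2}{2\sigma_{ij}^2}+\text{const},
\end{equation*}
where the constant does not involve $\theta$.

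Next I would differentiate to obtain the score,
\begin{equation*}
\ell'(\theta)=\sum_{(i,j)\in E}\frac{\hat\psi_{ij}-\psi_{ij}(\theta)}{\sigma_{ij}^2}\,\psi_{ij}'(\theta).
\end{equation*}
Taking $\mathbb E[\ell'(\theta)^2]$ and using independence and zero mean, all cross terms vanish. This leaves $\mathcal I(\theta)=\sum_{E}\psi_{ij}'(\theta)^2/\sigma_{ij}^2$. Equivalently, one can take $-\mathbb E[\ell''(\theta)]$; the term involving $\psi_{ij}''$ has zero mean, so the result is the same.

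Then I would insert the geometric model. From \eqref{eq:geo_phase_diff_corrected}, $\psi_{ij}'(\theta)=-\frac{2\pi}{\lambda}\bm b_{ij}^{\mathsf T}\dot{\bm u}(\theta)$, since the baseline $\bm b_{ij}$ is fixed. Squaring this removes the sign and gives exactly \eqref{eq:fisher_info}.

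The bound $\mathrm{var}(\hat\theta)\ge\mathcal I(\theta)^{-1}$ then follows from the scalar Cram\'er--Rao inequality. Its regularity conditions hold here: the Gaussian density is smooth in $\theta$ whenever $\bm u$ is differentiable, its support does not depend on $\theta$, and differentiation under the integral is justified. The bound requires $\mathcal I(\theta)>0$. This holds precisely when some edge in $E$ has a baseline that is not orthogonal to $\dot{\bm u}(\theta)$, which is the topology dependence of the result.

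The main obstacle is not the calculus but justifying the independence (diagonal-covariance) assumption. The edge phases come from correlations that share node signals, so in practice they can be correlated. I would therefore state the theorem under independent edge errors. If correlation is present, the general form is $\mathcal I=\dot{\bm\psi}^{\mathsf T}\bm\Sigma^{-1}\dot{\bm\psi}$, which reduces to \eqref{eq:fisher_info} when $\bm\Sigma=\mathrm{diag}(\sigma_{ij}^2)$.
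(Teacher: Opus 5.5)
Your proposal is correct and follows essentially the same route as the paper. Both write the independent-Gaussian log-likelihood, differentiate, note that the $\psi_{ij}''$ term has zero mean so that $-\mathbb{E}[\ell''(\theta)]$ (equivalently $\mathbb{E}[\ell'(\theta)^2]$) equals $\sum_{E}\psi_{ij}'(\theta)^2/\sigma_{ij}^2$, substitute $\psi_{ij}'(\theta)=-\frac{2\pi}{\lambda}\bm b_{ij}^{\mathsf T}\dot{\bm u}(\theta)$, and invoke the scalar Cram\'er--Rao inequality. Your explicit statement of the assumptions the paper leaves implicit (independent edge errors, $\theta$-independent variances, and $\mathcal{I}(\theta)>0$) is a sensible addition.
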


\begin{proof}
Let $\hat{\bm\psi}\in\mathbb R^{|E|}$ collect the conditioned, locally unwrapped edge-phase measurements. Under the independent Gaussian model, the conditional log-likelihood is
\[
\mathcal L(\theta)
= \mathrm{const}
-\frac12\sum_{(i,j)\in E}\frac{1}{\sigma_{ij}^2}\big(\hat\psi_{ij}-\psi_{ij}(\theta)\big)^2 .
\]
Differentiating with respect to $\theta$ yields
\[
\frac{\partial \mathcal L}{\partial\theta}
=\sum_{(i,j)\in E}\frac{1}{\sigma_{ij}^2}\big(\hat\psi_{ij}-\psi_{ij}(\theta)\big)\,
\frac{\partial \psi_{ij}(\theta)}{\partial\theta},
\]
and
\[
\frac{\partial^2 \mathcal L}{\partial\theta^2}
=
\sum_{(i,j)\in E}\frac{1}{\sigma_{ij}^2}
\left[
\big(\hat\psi_{ij}-\psi_{ij}(\theta)\big)\frac{\partial^2 \psi_{ij}(\theta)}{\partial\theta^2}
-\left(\frac{\partial \psi_{ij}(\theta)}{\partial\theta}\right)^2
\right]
\]
Since $\mathbb E[\hat\psi_{ij}\mid\theta]=\psi_{ij}(\theta)$, the first term vanishes in expectation, and therefore
\[
\mathcal I(\theta)\triangleq -\mathbb E\!\left[\frac{\partial^2 \mathcal L}{\partial\theta^2}\right]
=\sum_{(i,j)\in E}\frac{1}{\sigma_{ij}^2}\left(\frac{\partial \psi_{ij}(\theta)}{\partial\theta}\right)^2 .
\]
Using
\[
\psi_{ij}(\theta)= -\frac{2\pi}{\lambda}\bm b_{ij}^{\mathsf T}\bm u(\theta),
\]
we obtain
\[
\frac{\partial \psi_{ij}(\theta)}{\partial\theta}
= -\frac{2\pi}{\lambda}\bm b_{ij}^{\mathsf T}\dot{\bm u}(\theta).
\]
Substituting this derivative into the Fisher information expression gives \eqref{eq:fisher_info}. The CRLB then follows from the standard scalar Cramér--Rao inequality.
\end{proof}

\subsection{Cycle Projection and Denoising}

Unlike minimal receivers, a multi-port correlation graph may contain cycles. We show that projecting edge phases onto the gradient (cycle-consistent) subspace reduces the expected noise energy in the locally unwrapped regime.

\begin{theorem}[Noise Reduction via Gradient Projection]
\label{thm:cycle_contraction}
Let $G$ have $C$ connected components, so $\mathrm{rank}(\mathbf{B})=N_r-C$. Let $\bm{\nu} \in \mathbb{R}^{|E|}$ be i.i.d.\ zero-mean edge noise with variance $\sigma^2$. Define
\[
\mathbf{P}\triangleq \mathbf{B}^{\mathsf T} (\mathbf{B}\mathbf{B}^{\mathsf T})^{\dagger} \mathbf{B}.
\]
Then $\mathbf{P}$ is the orthogonal projector onto $\mathrm{Im}(\mathbf{B}^{\mathsf T})$, and for $\tilde{\bm{\nu}}=\mathbf{P}\bm{\nu}$,
\begin{equation}
\mathbb{E}[\|\tilde{\bm{\nu}}\|_2^2] = (N_r-C)\sigma^2
= \frac{N_r-C}{|E|}\,\mathbb{E}[\|\bm{\nu}\|_2^2].
\label{eq:noise_reduction}
\end{equation}
\end{theorem}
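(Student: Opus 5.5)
The proof has two parts. First we show that $\mathbf P$ is the orthogonal projector onto $\mathrm{Im}(\mathbf B^{\mathsf T})$. Then we compute the expected energy of the projected noise through a trace identity.

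For the projector claim, write $\mathbf L\triangleq\mathbf B\mathbf B^{\mathsf T}$ (the graph Laplacian). Symmetry of $\mathbf P$ is immediate, since $\mathbf L^{\dagger}$ is symmetric whenever $\mathbf L$ is. For idempotence,
\[
\mathbf P^2=\mathbf B^{\mathsf T}\mathbf L^{\dagger}\mathbf L\mathbf L^{\dagger}\mathbf B=\mathbf B^{\mathsf T}\mathbf L^{\dagger}\mathbf B=\mathbf P,
\]
using the Moore--Penrose identity $\mathbf L^{\dagger}\mathbf L\mathbf L^{\dagger}=\mathbf L^{\dagger}$. It remains to identify the range.
\begin{itemize}
\item Clearly $\mathrm{Im}(\mathbf P)\subseteq\mathrm{Im}(\mathbf B^{\mathsf T})$.
\item Conversely, take $\bm x=\mathbf B^{\mathsf T}\bm y$. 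Then $\mathbf P\bm x=\mathbf B^{\mathsf T}\mathbf L^{\dagger}\mathbf L\bm y$, and $\mathbf L^{\dagger}\mathbf L$ is the orthogonal projector onto $\mathrm{Im}(\mathbf L)=\mathrm{Im}(\mathbf B)=\mathcal N(\mathbf B^{\mathsf T})^{\perp}$.
\item Decompose $\bm y=\bm y_1+\bm y_0$ with $\bm y_0\in\mathcal N(\mathbf B^{\mathsf T})$. This gives $\mathbf P\bm x=\mathbf B^{\mathsf T}\bm y_1=\mathbf B^{\mathsf T}\bm y=\bm x$.
\end{itemize}
The equality $\mathrm{Im}(\mathbf L)=\mathrm{Im}(\mathbf B)$ follows from $\mathcal N(\mathbf B\mathbf B^{\mathsf T})=\mathcal N(\mathbf B^{\mathsf T})$. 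By Theorem~\ref{thm:identifiability}, this nullspace is spanned by the component indicators $\mathbf 1_{V_c}$.

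For the noise energy, use $\|\mathbf P\bm\nu\|^2=\bm\nu^{\mathsf T}\mathbf P^{\mathsf T}\mathbf P\bm\nu=\bm\nu^{\mathsf T}\mathbf P\bm\nu$. Since $\mathbb E[\bm\nu\bm\nu^{\mathsf T}]=\sigma^2\mathbf I$,
\[
\mathbb E\|\mathbf P\bm\nu\|^2=\mathrm{tr}\!\big(\mathbf P\,\mathbb E[\bm\nu\bm\nu^{\mathsf T}]\big)=\sigma^2\,\mathrm{tr}(\mathbf P).
\]
The trace of an orthogonal projector equals its rank, which here is $\dim\mathrm{Im}(\mathbf B^{\mathsf T})=\mathrm{rank}(\mathbf B)=N_r-C$. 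This uses the rank fact already invoked in Theorem~\ref{thm:identifiability}. Finally, $\mathbb E\|\bm\nu\|^2=|E|\sigma^2$, which yields the stated ratio $(N_r-C)/|E|$.

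The only step needing care is the range identification, which relies on pseudo-inverse properties for the singular Laplacian. I would handle it through the eigendecomposition $\mathbf L=\mathbf U\bm\Lambda\mathbf U^{\mathsf T}$: then $\mathbf L^{\dagger}\mathbf L$ is the projector onto the eigenvectors with nonzero eigenvalues. Everything else is a routine trace computation; the result needs only uncorrelated, equal-variance noise, not Gaussianity.
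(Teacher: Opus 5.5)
Your proof is correct and follows the same route as the paper: symmetry and idempotence of $\mathbf P$ give $\mathrm{tr}(\mathbf P)=\mathrm{rank}(\mathbf P)=\mathrm{rank}(\mathbf B)=N_r-C$, and the trace identity with $\mathbb E[\bm\nu\bm\nu^{\mathsf T}]=\sigma^2\mathbf I$ yields the result. The only difference is that you verify the projector claim explicitly, via $\mathbf L^{\dagger}\mathbf L\mathbf L^{\dagger}=\mathbf L^{\dagger}$ and $\mathrm{Im}(\mathbf B\mathbf B^{\mathsf T})=\mathrm{Im}(\mathbf B)$, whereas the paper simply asserts it ``by construction,'' so your version is slightly more complete.
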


\begin{proof}
By construction,
\[
\mathbf{P}\triangleq \mathbf{B}^{\mathsf T} (\mathbf{B}\mathbf{B}^{\mathsf T})^{\dagger} \mathbf{B}
\]
is the orthogonal projector onto $\mathrm{Im}(\mathbf{B}^{\mathsf T})$. Hence $\mathbf{P}$ is symmetric and idempotent, i.e.,  $
\mathbf{P}^{\mathsf T}=\mathbf{P}$ and $\mathbf{P}^2=\mathbf{P}$. Therefore, its eigenvalues are either $0$ or $1$, so $
\mathrm{Tr}(\mathbf{P})=\mathrm{rank}(\mathbf{P}).$
Since $\mathrm{Im}(\mathbf{P})=\mathrm{Im}(\mathbf{B}^{\mathsf T})$, we have
\[
\mathrm{rank}(\mathbf{P})
=\dim(\mathrm{Im}(\mathbf{B}^{\mathsf T}))
=\mathrm{rank}(\mathbf{B})
=N_r-C.
\]
Now, using $\tilde{\bm{\nu}}=\mathbf{P}\bm{\nu}$,
\[
\|\tilde{\bm{\nu}}\|_2^2
=\bm{\nu}^{\mathsf T}\mathbf{P}^{\mathsf T}\mathbf{P}\bm{\nu}
=\bm{\nu}^{\mathsf T}\mathbf{P}\bm{\nu},
\]
where the last step follows from $\mathbf{P}^{\mathsf T}=\mathbf{P}$ and $\mathbf{P}^2=\mathbf{P}$. Since $\bm{\nu}$ is i.i.d.\ zero mean with covariance $\mathbb{E}[\bm{\nu}\bm{\nu}^{\mathsf T}] = \sigma^2 \mathbf{I}_{|E|}$, the trace identity yields
\begin{equation}
\begin{aligned}
\mathbb{E}[\|\tilde{\bm{\nu}}\|_2^2]
=\mathrm{Tr}\!\big(\mathbf{P}\,\mathbb{E}[\bm{\nu}\bm{\nu}^{\mathsf T}]\big) 
=\sigma^2\mathrm{Tr}(\mathbf{P})
=(N_r-C)\sigma^2.
\end{aligned}
\end{equation}
Finally,
$\mathbb{E}[\|\bm{\nu}\|_2^2]
=\mathrm{Tr}\!\big(\mathbb{E}[\bm{\nu}\bm{\nu}^{\mathsf T}]\big)
=|E|\sigma^2$ yields
\[
\mathbb{E}[\|\tilde{\bm{\nu}}\|_2^2]
=(N_r-C)\sigma^2
=\frac{N_r-C}{|E|}\,\mathbb{E}[\|\bm{\nu}\|_2^2].
\]
\end{proof}

\subsection{Proposed Inference Pipeline}
\label{sec:pipeline}

We adopt a simple inference pipeline in which Theorem~\ref{thm:cycle_contraction} is applied before node-phase recovery. From the measured edge-phase vector $\widehat{\bm\psi}_k$, obtained by stacking the edge phases in \eqref{eq:phase_only_psi_local} according to a fixed edge orientation, we first denoise it by projection onto the gradient (cycle-consistent) subspace $
\tilde{\bm\psi}_k=\mathbf{P}\,\widehat{\bm\psi}_k.$ We then form denoised edge phasors
\begin{equation}
\tilde{\rho}_{ij}(k)\triangleq \exp(\jj\,\tilde{\psi}_{ij}(k))
\end{equation}
and estimate the latent node-phase vector $\bm z_k\in(\mathbb{S}^1)^{N_r}$, up to a global rotation, via weighted angular synchronization:
\begin{equation}
\widehat{\bm{z}}_{k}
= \argmax_{\bm{z}\in (\mathbb{S}^1)^{N_r}}
\sum_{(i,j)\in E} w_{ij}(k)\,
\Ree\!\left\{ \tilde{\rho}_{ij}(k)\, z_i z_j^* \right\}.
\label{eq:ang_sync}
\end{equation}
Communication symbols are detected by matching the recovered node-phase vector $\widehat{\bm{z}}_k$ to the receive-side spatial codebook
\begin{equation}
\mathcal{Z}_{\mathrm{rx}} \triangleq \{ \bm{z}^{(m)} \}_{m=1}^M,
\end{equation}
where $M$ denotes the modulation order. To account for the inherent global phase ambiguity, detection is performed by minimizing the Euclidean distance over all phase rotations $\alpha \in [0, 2\pi)$:
\begin{equation}
\hat{m}_k = \argmin_{m \in \{1, \dots, M\}}
\left(
\min_{\alpha \in [0, 2\pi)}
\left\| \widehat{\bm{z}}_k - e^{\jj\alpha}\bm{z}^{(m)} \right\|_2^2
\right).
\label{eq:comm_det}
\end{equation}
Once the data symbol is identified, the sensing task proceeds by removing the symbol-dependent contribution from the denoised edge phases via subtraction using the detected codeword. The resulting edge phases are then used to estimate the direction parameter $\theta$ according to the model in Sec.~\ref{sec:top-CRLB}.


\section{Simulation Results}\label{sec:result}


The G-SPMC framework is numerically validated over receiver configurations with different numbers of ports and correlation-graph topologies. Following the statistical model in Section~IV, the phase measurements are treated as locally unwrapped Gaussian variables. The practical efficacy of the proposed graph-based noise reduction is further demonstrated using a 16-symbol spatial modulation set, highlighting its benefit under higher-order spatial signaling.

Fig. \ref{fig:gspmc_comm_topology} illustrate the communication performance of G-SPMC as a function of both the receiver correlation-graph topology and the number of receive ports. Fig.~\ref{fig:gspmc_comm_topology_a} shows that, for a fixed receive-port count of \(N_r=4\), the BER decreases monotonically as the graph becomes more connected, consistent with the topologies summarized in Tab.~\ref{tab:noise_reduction}. This behavior is expected, since additional edges provide redundant phase-difference constraints that improve the reliability of the graph-based inference stage and, consequently, the codeword decision. In particular, in the moderate-to-high SNR regime, the single-cycle and complete graphs achieve a target BER at lower SNR than the path graph, highlighting a clear performance gain enabled by cycle redundancy. Fig.~\ref{fig:gspmc_comm_topology_b} further shows that, for the \textit{complete} graph topology, increasing the number of receive ports provides an additional BER improvement. This result indicates that, beyond graph connectivity, the spatial dimension of the receiver also plays a beneficial role by supplying a richer set of relative-phase observations, thereby strengthening the proposed correlation-graph-based detection.

\begin{figure}[t] \centering \subfigure[ $N_r=4$ vs. topologies]{\includegraphics[width=1.7in]{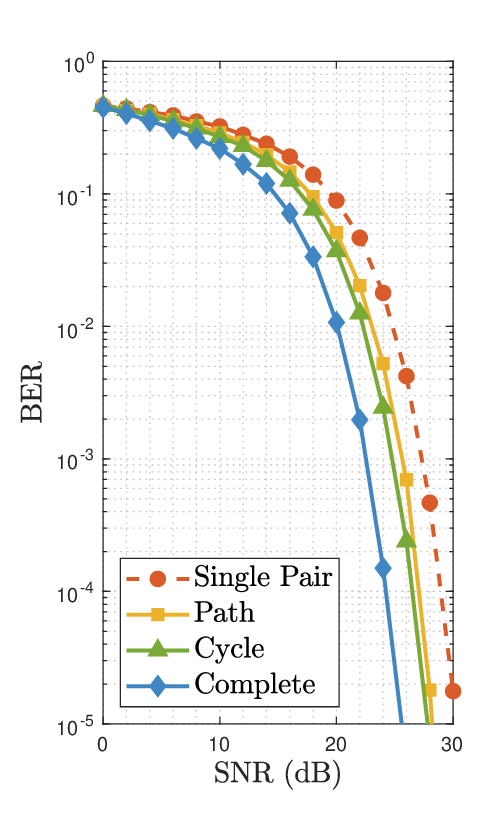}\label{fig:gspmc_comm_topology_a}} 
\subfigure[\textit{Complete} vs. port count] {\includegraphics[width=1.7in]{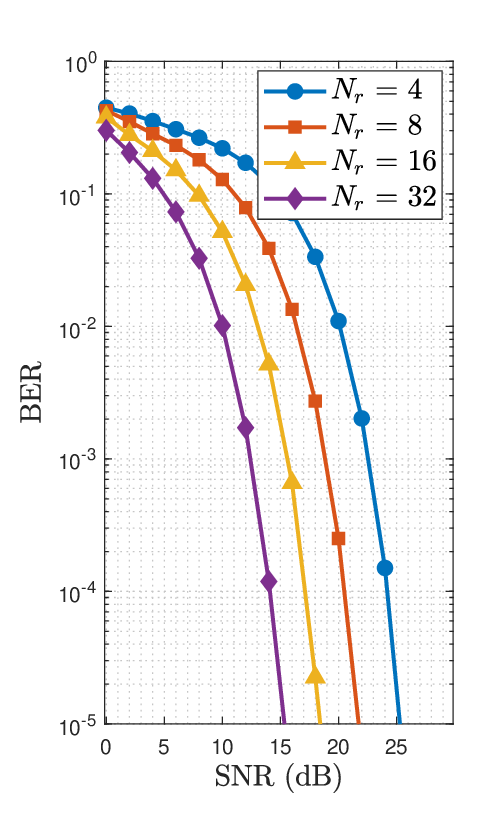}\label{fig:gspmc_comm_topology_b}} \caption{BER performance of G-SPMC for different receiver correlation-graph topologies and receive-port counts.} \label{fig:gspmc_comm_topology} 
\vspace{-0.5cm}
\end{figure}



\newcommand{\gSingle}{%
\begin{tikzpicture}[baseline=-0.5ex,scale=0.35,
  v/.style={circle,fill=black,inner sep=1.1pt},
  e/.style={line width=0.6pt}]
\node[v] (a) at (0,0) {}; \node[v] (b) at (1,0) {};
\draw[e] (a)--(b);
\end{tikzpicture}}

\newcommand{\gPath}{%
\begin{tikzpicture}[baseline=-0.5ex,scale=0.35,
  v/.style={circle,fill=black,inner sep=1.1pt},
  e/.style={line width=0.6pt}]
\node[v] (a) at (0,0) {}; \node[v] (b) at (1,0) {};
\node[v] (c) at (2,0) {}; \node[v] (d) at (3,0) {};
\node[v] (e) at (4,0) {}; \node[v] (f) at (5,0) {};
\draw[e] (a)--(b)--(c)--(d)--(e)--(f);
\end{tikzpicture}}

\newcommand{\gCycle}{%
\begin{tikzpicture}[baseline=-0.5ex,scale=0.35,
  v/.style={circle, fill=black, inner sep=1.2pt, draw=black, line width=0.3pt},
  e/.style={line width=0.6pt}]

\node[v] (v1) at (90:1.0) {};   
\node[v] (v2) at (30:1.0) {};   
\node[v] (v3) at (330:1.0) {};  
\node[v] (v4) at (270:1.0) {};  
\node[v] (v5) at (210:1.0) {};  
\node[v] (v6) at (150:1.0) {};  

\draw[e] (v1) to[bend left=20] (v2);
\draw[e] (v2) to[bend left=20] (v3);
\draw[e] (v3) to[bend left=20] (v4);
\draw[e] (v4) to[bend left=20] (v5);
\draw[e] (v5) to[bend left=20] (v6);
\draw[e] (v6) to[bend left=20] (v1);

\end{tikzpicture}}

\newcommand{\gComplete}{%
\begin{tikzpicture}[baseline=-0.5ex,scale=0.35,
  v/.style={circle, fill=black, inner sep=1.2pt, draw=black, line width=0.3pt},
  e/.style={line width=0.6pt}]
  
\node[v] (v1) at (90:1.0) {};   
\node[v] (v2) at (30:1.0) {};   
\node[v] (v3) at (330:1.0) {};  
\node[v] (v4) at (270:1.0) {};  
\node[v] (v5) at (210:1.0) {};  
\node[v] (v6) at (150:1.0) {};  

\draw[e] (v1) to[bend left=20] (v2);
\draw[e] (v2) to[bend left=20] (v3);
\draw[e] (v3) to[bend left=20] (v4);
\draw[e] (v4) to[bend left=20] (v5);
\draw[e] (v5) to[bend left=20] (v6);
\draw[e] (v6) to[bend left=20] (v1);

\draw[e] (v1) -- (v3); \draw[e] (v1) -- (v4); \draw[e] (v1) -- (v5);
\draw[e] (v2) -- (v4); \draw[e] (v2) -- (v5); \draw[e] (v2) -- (v6);
\draw[e] (v3) -- (v5); \draw[e] (v3) -- (v6);
\draw[e] (v4) -- (v6);
\end{tikzpicture}}

\begin{table}[h!]
\vspace{-0.1cm}
\centering
\caption{Noise Energy Reduction via Gradient Projection}
\label{tab:noise_reduction}
\setlength{\tabcolsep}{4pt} 
\footnotesize 
\begin{tabular}{@{}lccccc@{}} 
\toprule
\begin{tabular}{@{}c@{}}\textbf{Graph} \\ \textbf{Topology}\end{tabular}
& \textbf{
{Illustration}}
& $\mathbf{|\mathcal{E}|}$ 
& \begin{tabular}{@{}c@{}}\textbf{Indep.} \\ \textbf{Cycles}\end{tabular} 
& $\mathbf{\frac{N_r-C}{|\mathcal{E}|}}$ 
& \begin{tabular}{@{}c@{}}\textbf{Noise} \\ \textbf{Reduction}\end{tabular} \\
\midrule
Single Pair& \gSingle  & 1  & 0  & 1    & 0\%  \\
Path & \gPath        & 5  & 0  & 1    & 0\%  \\
Cycle & \gCycle       & 6  & 1  & 5/6  & 17\% \\
Complete & \gComplete    & 15 & 10 & 1/3  & 67\% \\
\bottomrule
\end{tabular}
\vspace{0cm}
\end{table}

Fig.~\ref{fig:gspmc_sensing_rmse} and Tab.~I indicate that increased connectivity also improves DoA sensing: the DoA RMSE decreases with connectivity, and the simulated curves approach the CRLB in Theorem~\ref{thm:crlb} at moderate-to-high SNR. As summarized in Tab.~I, this improvement is driven by cycle-enabled gradient projection, which suppresses the noise energy; acyclic topologies provide no reduction, whereas cyclic graphs yield measurable suppression, thereby enhancing DoA accuracy.

\begin{figure}[t]
    \centering
    \includegraphics[width=\linewidth]{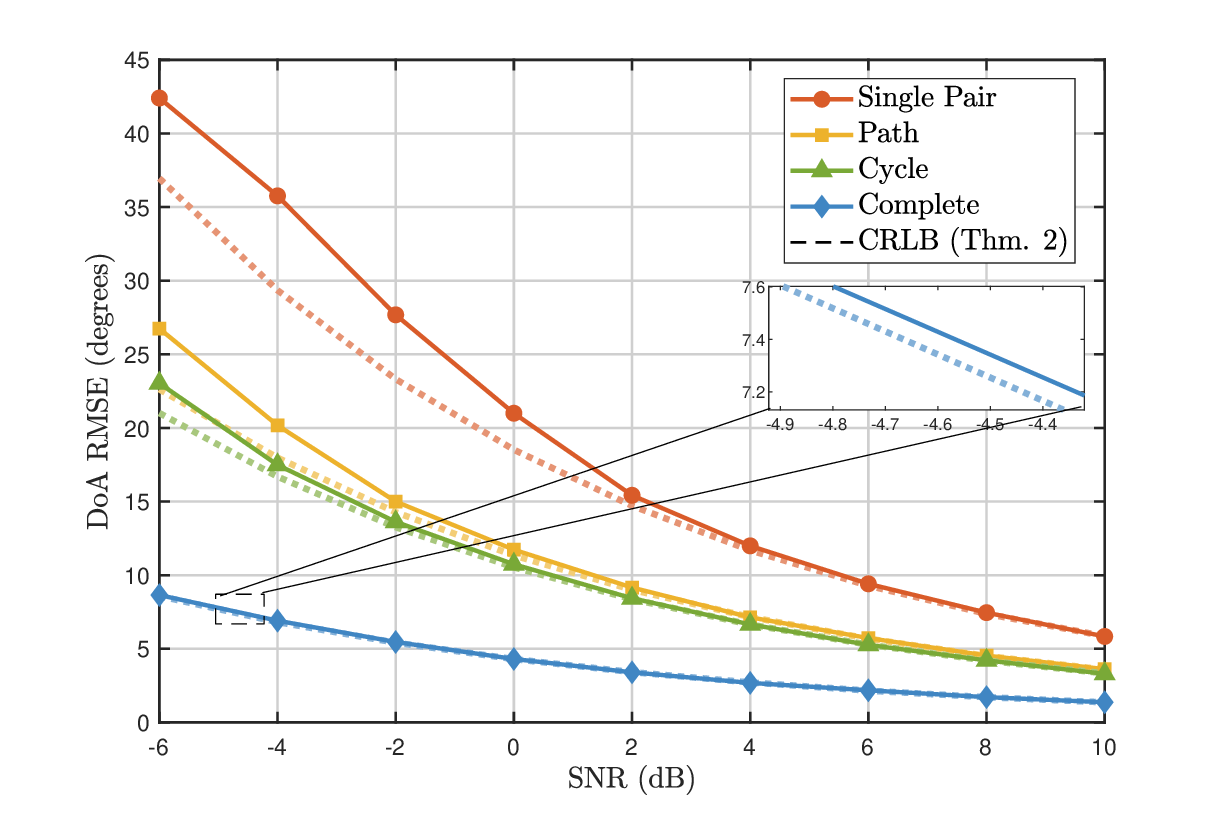}
    \caption{DoA RMSE of G-SPMC versus receiver correlation-graph topology.}
  \label{fig:gspmc_sensing_rmse}
  \vspace{-0.5cm}
\end{figure}


\section{Conclusion}
\label{sec:conclusion}
This article presented G-SPMC, a graph-based framework for JCAS with LO-free multi-port receivers. Using pairwise inter-antenna correlations, the proposed method supports both data detection and DoA estimation without carrier recovery. We showed that graph connectivity guarantees identifiability of the spatial phase state, and that projection onto the graph-consistent subspace can be explored to reduce the noise effect. Future work will extend the framework to multi-user settings.

\bibliographystyle{IEEEtran}
\bibliography{IEEEabrv, spmc_refs}

@STRING{IEEE_J_PROC       = "Proc. {IEEE}"}

@STRING{IEEE_J_MTT        = "{IEEE} Trans. Microw. Theory Techn."}

@STRING{IEEE_J_JSSC       = "{IEEE} J. Solid-State Circuits"}

@STRING{IEEE_M_COM        = "{IEEE} Commun. Mag."}

@STRING{IEEE_J_IOM = "{IEEE Internet Things Mag}"}

@ARTICLE{he2024isac,
  author={He, Dongxuan and Hou, Huazhou and Jiang, Rongkun and Yu, Xinghuo and Zhao, Zhongyuan and Mo, Yuanqiu and Huang, Yongming and Yu, Wenwu and Quek, Tony Q. S.},
  journal=IEEE_J_IOM                , 
  title={Integrating Sensing and Communication for {IoT} Systems: Task-Oriented Control Perspective}, 
  year={2024},
  volume={7},
}

@ARTICLE{yang2019hard,
  author={Yang, Xi and Matthaiou, Michail and Yang, Jie and Wen, Chao-Kai and Gao, Feifei and Jin, Shi},
  journal=IEEE_M_COM        , 
  title={Hardware-Constrained Millimeter-Wave Systems for {5G}: Challenges, Opportunities, and Solutions}, 
  year={2019},
  volume={57},
}

@INPROCEEDINGS{wentzloff2020review,
  author={Wentzloff, David D. and Alghaihab, Abdullah and Im, Jaeho},
  booktitle = {Proc. IEEE CICC}, 
  title={Ultra-Low Power Receivers for IoT Applications: A Review}, 
  year={2020},
 }

@ARTICLE{wurx2022,
  author={Mercier, Patrick P. and Calhoun, Benton H. and Wang, Po-Han Peter and Dissanayake, Anjana and Zhang, Linsheng and Hall, Drew A. and Bowers, Steven M.},
  journal=IEEE_J_JSSC       , 
  title={Low-Power {RF} Wake-Up Receivers: Analysis, Tradeoffs, and Design}, 
  year={2022},
  volume={2},
  pages={144-164},

}

@STRING{J_LPEA = "{J. Low Power Electron. Appl.}"}

@Article{jlpea2025wvrx,
AUTHOR = {Chen, Suhao and Yu, Xiaopeng and Huang, Xiongchun},
TITLE = {Wake-Up Receivers: A Review of Architectures Analysis, Design Techniques, Theories and Frontiers},
JOURNAL = J_LPEA,
VOLUME = {15},
YEAR = {2025},
}

@INPROCEEDINGS{selfdemod2010,
  author  = {X. Li and P. Baltus and P. van Zeijl and D. Milosevic and A. van Roermund},
  booktitle = {Proc. IEEE CICC}, 
  title   = {A 70-{GHz} 10.2-{mW} Self-Demodulator for {OOK} Modulation in 65-nm {CMOS} Technology},
  year    = {2010}
 }

@ARTICLE{godara97,
  author={Godara, L.C.},
  journal=IEEE_J_PROC       , 
  title={Application of antenna arrays to mobile communications. {II}. Beam-forming and direction-of-arrival considerations}, 
  year={1997},
  volume={85},
  pages={1195-1245},


}

@article{rajabi2018irr,
  author  = {M. Rajabi and N. Pan and S. Claessens and S. Pollin and D. Schreurs},
  title   = {Modulation Techniques for Simultaneous Wireless Information and Power Transfer With an Integrated Rectifier-Receiver},
  journal = IEEE_J_MTT,
  year={2018},
  volume={66},
  pages={2373-2385},

}

@book{fisher1993circular, place={Cambridge}, title={Statistical Analysis of Circular Data}, publisher={Cambridge University Press}, author={Fisher, N. I.}, year={1993}}

\end{document}